\title{Cartesian Tree Matching and Indexing}
\author{Sung Gwan Park}{Seoul National University, Korea}{sgpark@theory.snu.ac.kr}{}{}
\author{Amihood Amir}{Bar-Ilan University, Israel}{amir@esc.biu.ac.il}{}{}
\author{Gad M. Landau}{University of Haifa, Israel and New York University, USA}{landau@univ.haifa.ac.il}{}{}
\author{Kunsoo Park\footnote{Corresponding author}}{Seoul National University, Korea}{kpark@theory.snu.ac.kr}{}{}
\authorrunning{S.G. Park, A. Amir, G.M. Landau, and K. Park}
\keywords{Cartesian tree matching, Pattern matching, Indexing, Parent-distance representation}
\begin{document}
\maketitle
\begin{abstract}
We introduce a new metric of match, called \emph{Cartesian tree matching}, which means that two strings match if they have the same Cartesian trees. 
Based on Cartesian tree matching, we define single pattern matching for a text of length $n$ and a pattern of length $m$, and multiple pattern matching for a text of length $n$ and $k$ patterns of total length $m$.
We present an $O(n+m)$ time algorithm for single pattern matching, and an $O((n+m) \log k)$ deterministic time or $O(n+m)$ randomized time algorithm for multiple pattern matching.
We also define an index data structure called Cartesian suffix tree, and present an $O(n)$ randomized time algorithm to build the Cartesian suffix tree.
Our efficient algorithms for Cartesian tree matching use a representation of the Cartesian tree, called the \emph{parent-distance representation}. 
\end{abstract}

\section{Introduction}

String matching is one of fundamental problems in computer science, and it can be applied to many practical problems. In many applications string matching has variants derived from exact matching (which can be collectively called \emph{generalized matching}), such as order-preserving matching \cite{Order-preservingMatching:Ternary, Order-preservingMatching, Order-preservingMatching2}, parameterized matching \cite{ParameterizedMatching2, ParameterizedMatching, SuffixTree:Parameterized}, jumbled matching \cite{JumbledMatching}, overlap matching \cite{OverlapMatching}, pattern matching with swaps \cite{SwappedMatching}, and so on. These problems are characterized by the way of defining a \emph{match}, which depends on the application domains of the problems. In financial markets, for example, people want to find some patterns in the time series data of stock prices. In this case, they would like to know more about some pattern of price fluctuations than exact prices themselves \cite{StockTimeSeries}. Therefore, we need a definition of match which is appropriate to handle such cases.

\begin{figure}[b]
\centering
    \includegraphics[height=4.5cm]{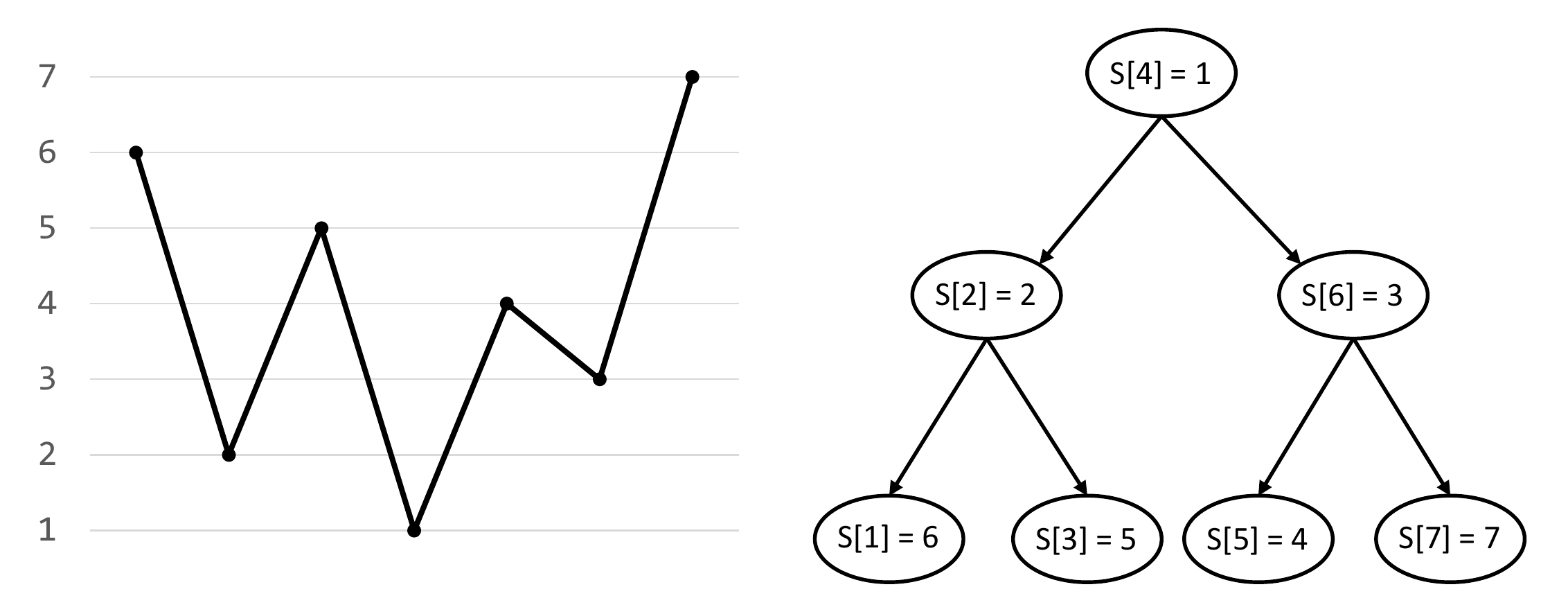}
    \caption{Example pattern $S=(6, 2, 5, 1, 4, 3, 7)$ and its corresponding Cartesian tree}
    \label{fig:cartesiantreematching}
\end{figure}

The Cartesian tree \cite{CartesianTree:Definition} is a tree data structure that represents an array, only focusing on the results of comparisons between numeric values in the array. In this paper we introduce a new metric of match, called \emph{Cartesian tree matching}, which means that two strings match if they have the same Cartesian trees. If we model the time series stock prices as a numerical string, we can find a desired pattern from the data by solving a Cartesian tree matching problem. For example, let's assume that the pattern we want to find looks like the picture on the left of Figure \ref{fig:cartesiantreematching}, which is a common pattern called the head-and-shoulder \cite{StockTimeSeries} (in fact there are two versions of the head-and-shoulder: one is the picture in Figure \ref{fig:cartesiantreematching} and the other is the picture reversed). 
The picture on the right of Figure \ref{fig:cartesiantreematching} is the Cartesian tree corresponding to the pattern on the left.
Cartesian tree matching finds every position of the text which has the same Cartesian tree as the picture on the right of Figure \ref{fig:cartesiantreematching}. 

Even though order-preserving matching \cite{Order-preservingMatching:Ternary, Order-preservingMatching, Order-preservingMatching2} can also be applied to finding patterns in time series data, Cartesian tree matching may be more appropriate than order-preserving matching in finding patterns. For instance, let's assume that we are looking for the pattern in Figure \ref{fig:cartesiantreematching} in time series stock prices. An important characteristic of the pattern is that the price hit the bottom (head), and it has two shoulders before and after the head. But the relative order between the two shoulders (i.e., which one is higher) does not matter.
If we model this pattern into order-preserving matching, then order-preserving matching imposes a relative order between two shoulders $S[2]$ and $S[6]$. Moreover, it imposes an unnecessary order between two valleys $S[3]$ and $S[5]$. Hence, order preserving matching may not be able to find such a pattern in time series data.  In contrast, the pattern in Figure \ref{fig:cartesiantreematching} can be represented by one Cartesian tree, and therefore Cartesian tree matching is a more appropriate metric in such cases.

In this paper we define string matching problems based on Cartesian tree matching:
single pattern matching for a text of length $n$ and a pattern of length $m$, and multiple pattern matching for a text of length $n$ and $k$ patterns of total length $m$, 
and we present efficient algorithms for them.
We also define an index data structure called Cartesian suffix tree as in the cases of parameterized matching and order-preserving matching \cite{SuffixTree:Parameterized, Order-preservingIndexing}, and present an efficient algorithm to build the Cartesian suffix tree.
To obtain efficient algorithms for Cartesian tree matching, we define a representation of the Cartesian tree, called the \emph{parent-distance representation}. 

In Section 2 we give basic definitions for Cartesian tree matching. In Section 3 we propose an $O(n+m)$ time algorithm for single pattern matching.
In Section 4 we present an $O((n+m) \log k)$ deterministic time or $O(n+m)$ randomized time algorithm for multiple pattern matching.
In Section 5 we define the Cartesian suffix tree, and present an $O(n)$ randomized time algorithm to build the Cartesian suffix tree of a string of length $n$.

\section{Problem Definition}

\subsection{Basic notations}
A \emph{string} is a sequence of characters in an alphabet $\Sigma$, which is a set of integers. We assume that the comparison between any two characters can be done in $O(1)$ time. For a string $S$, $S[i]$ represents the $i$-th character of $S$, and $S[i..j]$ represents a substring of $S$ starting from $i$ and ending at $j$.

\subsection{Cartesian tree matching}
A string $S$ can be associated with its corresponding Cartesian tree $CT(S)$ according to the following rules \cite{CartesianTree:Definition}:
\begin{itemize}
    \item If $S$ is an empty string, $CT(S)$ is an empty tree.
    \item If $S[1..n]$ is not empty and $S[i]$ is the minimum value among $S$, $CT(S)$ is the tree with $S[i]$ as the root, $CT(S[1..i-1])$ as the left subtree, and $CT(S[i+1..n])$ as the right subtree. If there are two or more minimum values, we choose the leftmost one as the root.
\end{itemize}

\noindent
Since each character in string $S$ corresponds to a node in Cartesian tree $CT(S)$, we can treat each character as a node in the Cartesian tree.

\emph{Cartesian tree matching} is the problem to find all the matches in the text which have the same Cartesian tree as a given pattern. Formally, we define it as follows:

\begin{definition} 
\emph{(Cartesian tree matching) Given two strings text $T[1..n]$ and pattern $P[1..m]$, find every $1 \leq i \leq n-m+1$ such that $CT(T[i..i+m-1]) = CT(P[1..m])$.}
\end{definition}

For example, let's consider a sample text $T = (41, 36, 15, 8, 41, 23, 28, 16, 26, 22, 56, 29, 12, 61)$. If we find the pattern in Figure \ref{fig:cartesiantreematching}, which is $P = (6, 2, 5, 1, 4, 3, 7)$, we can find a match at position 5 of the text, i.e., $CT(T[5..11])=CT(P[1..7])$. Note that the matched text is not a match in order-preserving matching \cite{Order-preservingMatching, Order-preservingMatching2} because the relative order between $T[6]=23$ and $T[10]=22$ is different from that between $P[2]=2$ and $P[6]=3$, but it is a match in Cartesian tree matching.

\section{Single Pattern Matching in \texorpdfstring{$O(n + m)$}{Lg} Time}
\subsection{Parent-distance representation}
In order to solve Cartesian tree matching without building every possible Cartesian tree, we propose an efficient representation to store the information about Cartesian trees, called the \emph{parent-distance representation}.

\begin{definition}
    \label{def:pdr}
    \emph {(Parent-distance representation) Given a string $S[1..n]$, the} parent-distance representation \emph{of $S$ is an integer string $PD(S)[1..n]$, which is defined as follows:}
    \begin{equation*}
        PD(S)[i] = \begin{cases}
           i - \max_{1 \leq j < i}\{{j : S[j] \leq S[i]} \} &\text{\emph{if such $j$ exists}}\\
            0 &\text{\emph{otherwise}}
        \end{cases}
    \end{equation*}
\end{definition}

For example, the parent-distance representation of string $S = (2, 5, 4, 2, 2, 1)$ is $PD(S) = (0, 1, 2, 3, 1, 0)$. Note that $S[j]$ in Definition \ref{def:pdr} represents the parent of $S[i]$ in Cartesian tree $CT(S[1..i])$. Furthermore, if there is no such $j$, $S[i]$ is the root of Cartesian tree $CT(S[1..i])$.

Theorem \ref{thm:ctequivalenttopdr} shows that the parent-distance representation has a one-to-one mapping to the Cartesian tree, so it can substitute the Cartesian tree without any loss of information.

\begin{restatable}[]{theorem}{ctequivalenttopdr}
    \label{thm:ctequivalenttopdr}
    \emph{Two strings $S_1$ and $S_2$ have the same Cartesian trees if and only if $S_1$ and $S_2$ have the same parent-distance representations.}
\end{restatable}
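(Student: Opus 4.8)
The plan is to prove both directions by relating the parent-distance representation directly to the recursive structure of the Cartesian tree. The cleanest route is to show that $PD(S)$ determines $CT(S)$ and conversely. For the forward direction ($CT(S_1)=CT(S_2) \Rightarrow PD(S_1)=PD(S_2)$), I would first establish a structural lemma: for each index $i$, the value $PD(S)[i]$ encodes the position of the parent of node $i$ in the Cartesian tree $CT(S[1..i])$, as the paper already notes after Definition~\ref{def:pdr}. The key observation is that the parent of $S[i]$ in the \emph{full} tree $CT(S[1..n])$ is related in a controlled way to its parent in the prefix tree $CT(S[1..i])$: when we extend $S[1..i-1]$ to $S[1..i]$, the new node $i$ becomes the right child of the deepest node on the rightmost path of $CT(S[1..i-1])$ whose value is $\le S[i]$, and that node is exactly the one at distance $PD(S)[i]$. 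Since the insertion process that builds $CT(S)$ incrementally is fully determined by these choices, two strings producing the same sequence of insertion points produce the same tree, and vice versa.

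Concretely, I would argue by induction on $n$. Assume the claim holds for all strings of length $< n$. Given $S$, let $r$ be the position of the (leftmost) minimum, which is the root of $CT(S)$; note that $r$ is characterized purely in terms of $PD(S)$ as the largest index with $PD(S)[r]=0$ together with the fact that all earlier zero-entries correspond to nodes that are later "captured"—more simply, $r$ is the unique index $i$ such that $PD(S)[i]=0$ and $PD(S)[j]\ne 0$ for all $j>i$. Then $PD(S)[1..r-1]$ coincides with $PD(S[1..r-1])$ directly from Definition~\ref{def:pdr}, since for $i<r$ the max in the definition ranges over $j<i<r$. For the right part, I would show that $PD(S)[r+1..n]$ determines $PD(S[r+1..n])$: for $i>r$, the witness $j$ in Definition~\ref{def:pdr} for $S$ is either $\ge r+1$ (giving the same value as computed within $S[r+1..n]$) or it equals $r$ (meaning no element of $S[r+1..i-1]$ is $\le S[i]$, so $PD(S[r+1..n])[i-r]=0$); these two cases are distinguishable from $PD(S)$ alone by checking whether $i - PD(S)[i] \ge r+1$. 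Hence from $PD(S)$ we can reconstruct both $PD(S[1..r-1])$ and $PD(S[r+1..n])$, and by the induction hypothesis recover the left and right subtrees, hence $CT(S)$; symmetrically, $CT(S)$ determines $r$, the two sub-Cartesian-trees, hence (by induction) the two sub-parent-distance-representations, hence $PD(S)$ via the same case analysis run backwards.

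The main obstacle is making the bookkeeping of the right-subtree correspondence fully rigorous: one must verify carefully that the shift by $r$ behaves correctly and that the "$j=r$ versus $j\ge r+1$" dichotomy is both exhaustive and detectable from $PD(S)$, because an element $S[i]$ with $i>r$ could in principle have a witness $j$ with $S[j] \le S[i]$ lying anywhere in $[1,i-1]$, and one needs that the \emph{largest} such $j$ is never strictly between $1$ and $r-1$ unless it can equivalently be taken to be $r$ (which holds since $S[r]$ is the global minimum, so $S[r]\le S[i]$ always, forcing the max-witness to be $\ge r$). Once that point is nailed down, both directions follow by the induction, and the one-to-one correspondence is immediate since each representation determines the other. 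An alternative, possibly shorter, write-up avoids the recursion entirely: prove directly that $PD(S)[i] = i - j$ where $S[j]$ is the parent of $S[i]$ in $CT(S)$ whenever $S[i]$ is a right child, and $PD(S)[i]=0$ or points to an ancestor otherwise, giving an explicit formula for the tree edges in terms of $PD(S)$; I would mention this but carry out the inductive argument as the primary proof since it most transparently yields the "if and only if."
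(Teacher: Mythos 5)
Your proposal is correct, but it takes a genuinely different route from the paper. You induct on the root decomposition: split $S$ at the position $r$ of the leftmost minimum, observe that $r$ is recoverable from $PD(S)$ as the last index with a zero entry, that $PD(S)[1..r-1]=PD(S[1..r-1])$ verbatim, and that $PD(S)[r+1..n]$ and $PD(S[r+1..n])$ determine each other via the ``witness $=r$ versus witness $\ge r+1$'' dichotomy (which is sound precisely because $S[r]$ is the global minimum, so the max-witness for any $i>r$ is always at least $r$). This mirrors the recursive definition of $CT$ directly, and your case analysis essentially re-derives the paper's substring formula (Equation~\ref{eq:substr}) for the suffix $S[r+1..n]$. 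The paper instead inducts on the string length by peeling off the \emph{last} character: if $S[k+1]$ is removed from $CT(S[1..k+1])$ one obtains $CT(S[1..k])$, only the final entry of $PD$ changes, and that entry records exactly the parent of $S[k+1]$ on the rightmost path. The paper's decomposition is slightly lighter on bookkeeping (no re-indexing, no case split on where the witness lies, exactly one $PD$ entry to account for per step) and lines up with the stack-based computation of Algorithm~\ref{alg:computepdref}; yours lines up with the top-down definition of the Cartesian tree and makes the root's detectability from $PD$ explicit, which is a nice structural fact in its own right. Both inductions go through; the one point you flag as the ``main obstacle'' (that the max-witness for $i>r$ can never lie strictly left of $r$) is indeed the crux of your route, and your justification of it is correct.
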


\begin{proof}
If two strings have different lengths, they have different Cartesian trees and different parent-distance representations, so the theorem holds. Therefore, we can only consider the case where $S_1$ and $S_2$ have the same length. Let $n$ be the length of $S_1$ and $S_2$. We prove the theorem by an induction on $n$.

If $n = 1$, $S_1$ and $S_2$ will always have the same Cartesian trees with only one node. Furthermore, they will have the same parent-distance representation $(0)$. Therefore, the theorem holds when $n = 1$.

Let's assume that the theorem holds when $n = k$, and show that it holds when $n = k+1$.
 
$(\Longrightarrow)$ Assume that $S_1[1..k+1]$ and $S_2[1..k+1]$ have the same Cartesian trees (i.e., $CT(S_1[1..k+1]) = CT(S_2[1..k+1])$). There are two cases.
\begin{itemize}
    \item If $S_1[k+1]$ and $S_2[k+1]$ are not roots of the Cartesian trees, let $S_1[j]$ be the parent of $S_1[k+1]$, and $S_2[l]$ the parent of $S_2[k+1]$. 
    Since $CT(S_1[1..k+1])=CT(S_2[1..k+1])$, we have $j=l$.
    If we remove $S_1[k+1]$ from Cartesian tree $CT(S_1[1..k+1])$, we obtain the tree $CT(S_1[1..k])$, where the left subtree of $S_1[k+1]$ is attached to its parent $S_1[j]$.
    If we remove $S_2[k+1]$ from $CT(S_2[1..k+1])$, we obtain $CT(S_2[1..k])$ in the same way.
    Since $CT(S_1[1..k+1]) = CT(S_2[1..k+1])$, we get $CT(S_1[1..k]) = CT(S_2[1..k])$, and therefore $PD(S_1)[1..k] = PD(S_2)[1..k]$ by induction hypothesis.
Since $PD(S_1)[k+1]=k+1-j$ and $PD(S_2)[k+1]=k+1-l$ (and $j=l$), we have $PD(S_1) = PD(S_2)$.
\item If $S_1[k+1]$ and $S_2[k+1]$ are roots, we remove $S_1[k+1]$ and $S_2[k+1]$ to get $CT(S_1[1..k])$ and $CT(S_2[1..k])$. Since $CT(S_1[1..k+1]) = CT(S_2[1..k+1])$, we have $CT(S_1[1..k]) = CT(S_2[1..k])$, and therefore $PD(S_1)[1..k] = PD(S_2)[1..k]$ by induction hypothesis.
Since $PD(S_1)[k+1]=PD(S_2)[k+1]=0$ in this case, we get $PD(S_1) = PD(S_2)$.
\end{itemize}
    
 $(\Longleftarrow)$ Assume that $S_1[1..k+1]$ and $S_2[1..k+1]$ have the same parent-distance representations (i.e., $PD(S_1)[1..k+1] = PD(S_2)[1..k+1]$). Since $PD(S_1)[1..k] = PD(S_2)[1..k]$, we have $CT(S_1[1..k]) = CT(S_2[1..k])$ by induction hypothesis. From $CT(S_1[1..k])$, we can derive $CT(S_1[1..k+1])$ as follows.
 If $PD(S_1)[k+1]>0$, let $x$ be $S_1[k+1-PD(S_1)[k+1]]$. We insert $S_1[k+1]$ into $CT(S_1[1..k])$ so that the parent of $S_1[k+1]$ is $x$ and the original right subtree of $x$ becomes the left subtree of $S_1[k+1]$.
 If $PD(S_1)[k+1]=0$, $S_1[k+1]$ is the root of $CT(S_1[1..k+1])$ and $CT(S_1[1..k])$ becomes the left subtree of $S_1[k+1]$.
 We derive $CT(S_2[1..k+1])$ from $CT(S_2[1..k])$ in the same way.
 Since $CT(S_1[1..k]) = CT(S_2[1..k])$ and $PD(S_1)[k+1]=PD(S_2)[k+1]$, we can conclude that $CT(S_1[1..k+1]) = CT(S_2[1..k+1])$.
       
Therefore, we have proved that there is a one-to-one mapping between Cartesian trees and parent-distance representations.
\end{proof}

\subsection{Computing parent-distance representation}
Given a string $S[1..n]$, we can compute the parent-distance representation in linear time using a stack, as in \cite{Order-preservingIndexing, CartesianTree:BitRepresentation}. The main idea is that if two characters $S[i]$ and $S[j]$ for $i < j$ satisfy $S[i] > S[j]$, $S[i]$ cannot be the parent of $S[k]$ for any $k > j$. Therefore, we will only store $S[i]$ which does not have such $S[j]$ while scanning from left to right. If we store such $S[i]$ only, they form a non-decreasing subsequence of $S$. When we consider a new value, therefore, we can pop values that are larger than the new value, find its parent, and push the new value and its index into the stack. Algorithm \ref{alg:computepdref} describes the algorithm to compute $PD(S)$.

\begin{algorithm}[t]
\caption{Computing parent-distance representation of a string}\label{alg:computepdref}
\begin{algorithmic}[1]
\Procedure{PARENT-DIST-REP}{$S[1..n]$}
    \State $ST \gets \text{an empty stack}$ 
    \For{$i \gets 1$ \textbf{to} $n$}
        \While{$ST$ is not empty}
            \State $(value, index) \gets ST.top$
            \If{$value \leq S[i]$}
                \State \textbf{break}
            \EndIf
            \State $ST.pop$
        \EndWhile
        
        \If{$ST$ is empty}
            \State $PD(S)[i] \gets 0$
        \Else
            \State $PD(S)[i] \gets i - index$
        \EndIf
        
        \State $ST.push((S[i], i))$
    \EndFor
    \State \textbf{return} $PD(S)$
\EndProcedure
\end{algorithmic}
\end{algorithm}
 
Furthermore, given the parent-distance representation of string $S$, we can compute the parent-distance representation of any substring $S[i..j]$ easily. 
To compute $PD(S[i..j])[k]$, we need only check whether the parent of $S[i+k-1]$
is within $S[i..j]$ or not (i.e., the parent is outside if $PD(S)[i+k-1] \geq k$).

\begin{equation}
    PD(S[i..j])[k] = \begin{cases}
           0 &\text{if $PD(S)[i+k-1] \geq k$} \label{eq:substr}\\
           PD(S)[i+k-1] &\text{otherwise.}  
        \end{cases} 
\end{equation}

For example, the parent-distance representation of string $S = (2, 7, 5, 6, 4, 3, 1)$ is $PD(S) = (0, 1, 2, 1, 4, 5, 0)$. For $PD(S[2..7])$, we can use the above equation and compute the value at each position in constant time, getting $PD(S[2..7]) = (0, 0, 1, 0, 0, 0)$. 

\subsection{Failure function}
We can define a failure function similar to the one used in the KMP algorithm \cite{KMP}.

\begin{definition}
    \emph{(Failure function) The} failure function \emph{$\pi$ of string $P$ is an integer string such that:}
    \begin{equation*}
        \pi[q] = \begin{cases}
           \max \{ k : CT(P[1..k]) = CT(P[q-k+1..q])   \text{\emph{ for }} 1 \leq k < q \} & \text{\emph{if $q > 1$}}\\
            0 &\text{\emph{if $q = 1$}}
        \end{cases}
    \end{equation*}
\end{definition}
That is, $\pi[q]$ is the largest $k$ such that the prefix and the suffix of $P[1..q]$ of length $k$ have the same Cartesian trees. For example, assuming that $P = (5, 7, 4, 6, 1, 3, 2)$, the corresponding failure function is $\pi = (0, 1, 1, 2, 3, 4, 1)$. We can see that $CT(P[1..4]) = CT(P[3..6])$ from $\pi[6] = 4$.
We will present an algorithm to compute the failure function of a given string in Section 3.5.

\begin{algorithm}[t]
\caption{Text search of Cartesian tree matching}\label{alg:textsearch}
\begin{algorithmic}[1]
\Procedure{CARTESIAN-TREE-MATCH}{$T[1..n], P[1..m]$}
    \State $PD(P)$ $\gets$ PARENT-DIST-REP($P$)
    \State $\pi$ $\gets$ FAILURE-FUNC($P$)
    \State $len \gets 0$
    
    \State $DQ \gets$ an empty deque
    \For{$i \gets 1$ \textbf{to} $n$}
        \State Pop elements $(value,index)$ from back of $DQ$ such that $value > T[i]$ 
            
        \While{$len \neq 0$}
            \If{$PD(T[i-len..i])[len+1] = PD(P)[len+1]$}
                \State \textbf{break}
            \Else
                \State $len \gets \pi[len]$
                
                \State Delete elements $(value,index)$ from front of $DQ$ such that $index < i-len$
            \EndIf
        \EndWhile
        
        \State $len \gets len + 1$
        \State $DQ.push\_back((T[i], i))$
        
        \If{$len = m$}
            \State \textbf{print} ``Match occurred at $i-m+1$''
            \State $len \gets \pi[len]$
	\State Delete elements $(value,index)$ from front of $DQ$ such that $index \leq i-len$	
        \EndIf
    \EndFor
\EndProcedure
\end{algorithmic}
\end{algorithm}

\subsection{Text search}
As in the original KMP text search algorithm, we can use the failure function in order to achieve linear time text search: scan the text from left to right, and use the failure function every time we find a mismatch between the text and the pattern. We apply this idea to Cartesian tree matching.

In order to perform a text search using $O(m)$ space, we compute the parent-distance representation of the text \emph{online} as we read the text, so that we don't need to store the parent-distance representation of the whole text, which would cost $O(n)$ space. Furthermore, among the text characters which are matched with the pattern, we only have to store elements that form a non-decreasing subsequence by using a \emph{deque} (instead of a stack in Section 3.2) in order to delete elements in front. Using this idea, we can keep the size of the deque to be always smaller than or equal to $m$. Therefore, we can perform the text search using $O(m)$ space. 
Algorithm \ref{alg:textsearch} shows the text search algorithm of Cartesian tree matching.
In line 9 we need to compute $x=PD(T[i-len..i])[len+1]$. If the deque is empty, then $x=0$. Otherwise, let $(value,index)$ be the element at the back of the deque. Then $x=i-index$. This computation takes constant time.
Just before line 14, we do not compare $PD(T[i])$ and $PD(P)[1]$ when $len=0$, because they always match. Therefore, we can safely perform line 14.

\subsection{Computing failure function}
We compute the failure function $\pi$ in a way similar to the text search, as in the KMP algorithm. However, we can compute the parent-distance representation of the pattern in $O(m)$ time before we compute the failure function. Hence we don’t need a deque and the computation is slightly simpler than text search. Algorithm \ref{alg:failurefunction} shows the procedure to compute the failure function.

\begin{algorithm}[t]
\caption{Computing failure function in Cartesian tree matching}\label{alg:failurefunction}
\begin{algorithmic}[1]
\Procedure{FAILURE-FUNC}{$P[1..m]$} 
    \State $PD(P)$ $\gets$ PARENT-DIST-REP($P$)
    \State $len \gets 0$
    \State $\pi[1] \gets 0$
    \For{$i \gets 2$ \textbf{to} $m$}
        \While{$len \neq 0$}
            \If{$PD(P[i-len..i])[len+1] = PD(P[1..len+1])[len+1]$}
                \State \textbf{break}
            \Else
                \State $len \gets \pi[len]$
            \EndIf
        \EndWhile
        
        \State $len \gets len + 1$
        \State $\pi[i] \gets len$
    \EndFor
\EndProcedure
\end{algorithmic}
\end{algorithm}

\subsection{Correctness and time complexity}
Since our algorithm for Cartesian tree matching including text search and the computation of the failure function follow the KMP algorithm, it is easy to see that our algorithm correctly finds all occurrences (in the sense of Cartesian tree matching) of the pattern in the text.
Since our algorithm checks one character of the parent-distance representation in constant time, it takes $O(n)$ time for text search and $O(m)$ time to compute the failure function, as in KMP algorithm. Therefore, our algorithm requires $O(m+n)$ time for Cartesian tree matching using $O(m)$ space.

\subsection{Cartesian tree signature}

There is an alternative representation of Cartesian trees, called \emph{Cartesian tree signature} \cite{CartesianTree:BitRepresentation}. The Cartesian tree signature of $S[1..n]$ is an array $L[1..n]$ such that $L[i]$ equals the number of the elements popped from the stack in the $i$-th iteration of Algorithm \ref{alg:computepdref}. Furthermore, the Cartesian tree signature can be represented as a bit string $1^{L[1]}01^{L[2]}0\cdots 1^{L[n]}0$ of length less than $2n$, which is a succinct representation of a Cartesian tree. For example, the Cartesian tree signature of string $S = (2, 7, 5, 6, 4, 3, 1)$ is $L = (0, 0, 1, 0, 2, 1, 2)$, and its corresponding bit string is $0010011010110$. 
 
We can use this representation to perform Cartesian tree matching. While we compute the Cartesian tree signature, we store one more array $D[1..n]$, which is defined as follows: If $S[i]$ is never popped out from the stack, $D[i] = 0$. Otherwise, let $S[j]$ be the value which popped $S[i]$ out from the stack, and $D[i] = j - i$. For string $S = (2, 7, 5, 6, 4, 3, 1)$, we have $D = (6, 1, 2, 1, 1, 1, 0)$. 

Using array $D$, we can delete one character at the front of string $S[1..n]$ in constant time. In order to get Cartesian tree signature $L'$ and its corresponding $D'$ for $S[2..n]$, we do the following: If $D[1] > 0$, we decrease $L[D[1] + 1]$ by one and erase $L[1]$ from $L$. If $D[1] = 0$, we just erase $L[1]$. After that, we delete $D[1]$ from $D$ to get $D'$. For example, if we want to delete one character at the front of $S = (2, 7, 5, 6, 4, 3, 1)$, we decrease $L[D[1] + 1] = L[7]$ by one, and delete $L[1]$ and $D[1]$. This results in $L' = (0, 1, 0, 2, 1, 1)$ and $D' = (1, 2, 1, 1, 1, 0)$. These arrays are the correct Cartesian tree signature and its corresponding array $D$ of $S[2..7] = (7, 5, 6, 4, 3, 1)$. In this way, we can perform Algorithm \ref{alg:textsearch} using the Cartesian tree signature. Computing the failure function can also be done in a similar way.

Note that the Cartesian tree signature can represent a Cartesian tree using less space than the parent-distance representation, but it needs an auxiliary array $D$ to perform string matching, which uses the same space as the parent-distance representation. For Cartesian tree matching, therefore, it uses more space than Algorithm \ref{alg:textsearch}.

\section{Multiple Pattern Matching in \texorpdfstring{$O((n+m) \log k)$ Time}{Lg}}
In this section we extend Cartesian tree matching to the case of multiple patterns.  Definition \ref{def:multiplepatternmatch} gives the formal definition of multiple pattern matching.
\begin{definition}
\label{def:multiplepatternmatch}
    \emph{(Multiple pattern Cartesian tree matching) Given a text $T[1..n]$ and patterns $P_1[1..m_1], P_2[1..m_2], ..., P_k[1..m_k]$, where $m = m_1 + m_2 + \cdots + m_k$, } multiple pattern Cartesian tree matching \emph{is to find every position in the text which matches at least one pattern, i.e., it has the same Cartesian tree as that of at least one pattern.}
\end{definition}
We modify the Aho-Corasick algorithm \cite{Aho-Corasick} using the parent-distance representation defined in Section 3.1 to do multiple pattern matching in $O((n+m) \log k)$ time.

\begin{figure}
    \centering
    \includegraphics[height=6.5cm]{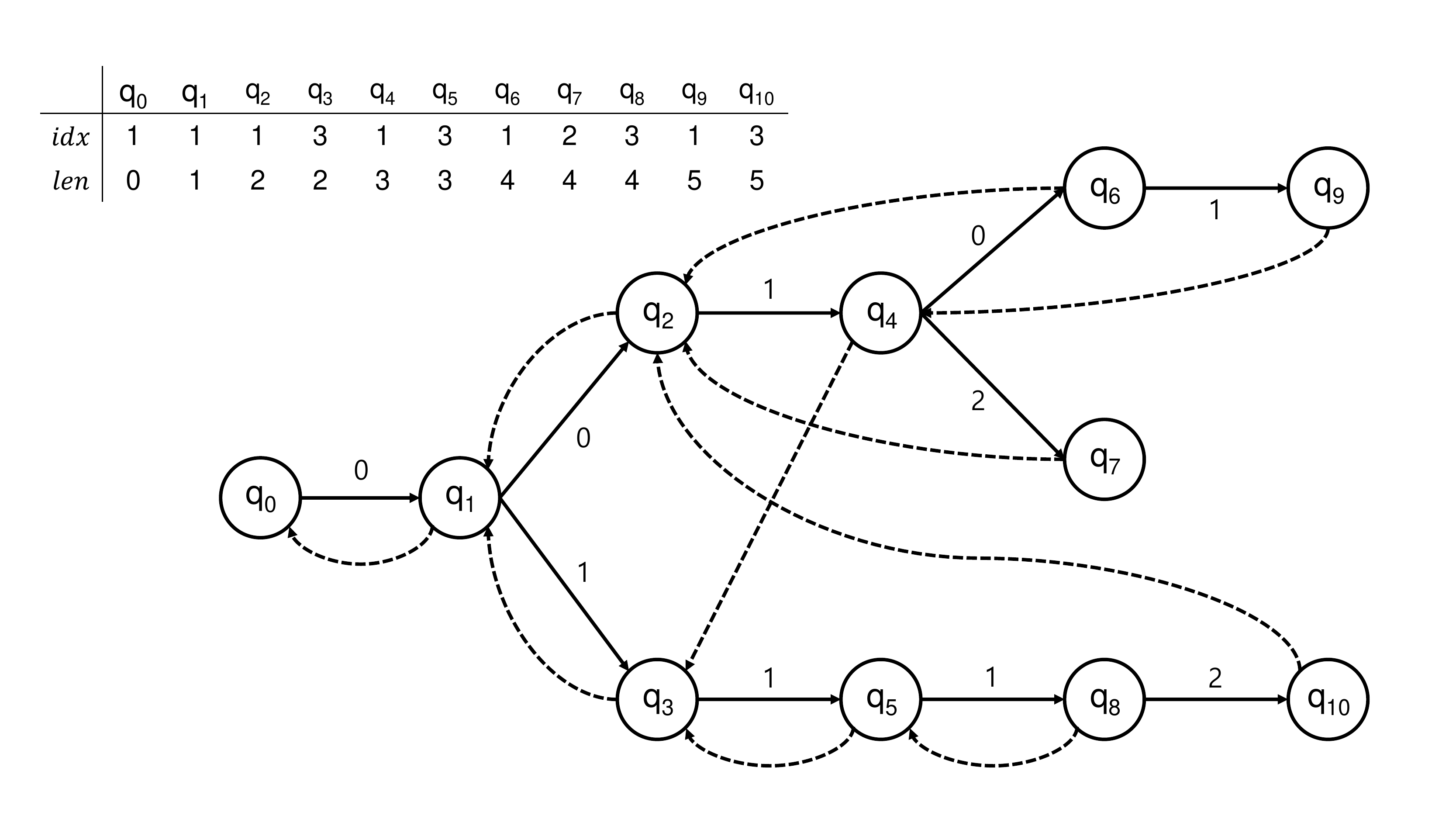}
    \caption{Aho-Corasick automaton for $P_1 = (4, 2, 3, 1, 5), P_2 = (3, 1, 4, 2), P_3 = (1, 2, 3, 5, 4)$}
    \label{fig:ahocorasickexample}
\end{figure}

\subsection{Constructing the Aho-Corasick automaton} 

Instead of using the patterns themselves in the Aho-Corasick automaton, we use their parent-distance representations to make an automaton. Each node in the automaton corresponds to the prefix of the parent-distance representation of some pattern. 
We maintain two integers $idx$ and $len$ for every node such that the node corresponds to the parent-distance representation of the pattern prefix $P_{idx}[1..len]$. If there are more than one possible indexes, we store the smallest one.
Each node also has a state transition function $trans(x)$, which gets an integer $x$ as an input and returns the next node, or report that there is no such node. We can construct the trie and the state transition function for every node in $O(m\log k)$ time, assuming that we use a balanced binary search tree to implement the transition function. Figure \ref{fig:ahocorasickexample} shows an Aho-Corasick automaton for three patterns $P_1 = (4, 2, 3, 1, 5), P_2 = (3, 1, 4, 2), P_3 = (1, 2, 3, 5, 4)$, where we use the parent-distance representations of the patterns, $PD(P_1) = (0, 0, 1, 0, 1), PD(P_2) = (0, 0, 1, 2), PD(P_3) = (0, 1, 1, 1, 2)$ to construct the automaton. 

\begin{algorithm}[t]
\caption{Computing failure function in multiple pattern matching}\label{alg:multiplefailurefunction}
\begin{algorithmic}[1]
\Procedure{MULTIPLE-FAILURE-FUNC}{$P_1$, $P_2$, ..., $P_k$}
    \For{$i \gets 1$ \textbf{to} $k$}
        \State $PD(P_i) \gets$ PARENT-DIST-REP($P_i$)
    \EndFor
     
    \State $TR \gets$ Build trie with $PD(P_i)$'s
    
    \For{$node \gets$ breadth-first traversal of the trie}
        \State $len \gets len[node]$
        \State $idx \gets idx[node]$
        
        \State $\pi[node] \gets TR.root$
        \State $ptr \gets$ parent of $node$ in the trie
        \While{$ptr \neq TR.root$}
            \State $ptr \gets \pi[ptr]$
            \State $plen \gets len[ptr]$
            \State $x \gets PD(P_{idx}[len-plen..len])[plen+1]$
            
            \If {$ptr.trans(x)$ \textbf{exists}}
                \State $\pi[node] \gets ptr.trans(x)$
                \State \textbf{break}
            \EndIf
        \EndWhile
    \EndFor
\EndProcedure
\end{algorithmic}
\end{algorithm}

The failure function $\pi$ of the Aho-Corasick automaton is defined as follows: Let $q_i$ be a node in the automaton, and $s_i$ be the substring that node $q_i$ represents in the trie. Let $s_j$ be the longest proper suffix of $s_i$ which matches (in the sense of Cartesian tree matching) prefix $s_k$ of some pattern $P_k$. The failure function of $q_i$ is defined as node $q_k$ (i.e., $\pi[q_i]=q_k$). The dotted lines in Figure \ref{fig:ahocorasickexample} shows the failure function of each node. For example, node $q_7$ represents $P_2[1..4]$, and its failure function $q_2$ represents $P_2[1..2]$. We can see that $P_2[1..2]$ matches $P_2[3..4]$ (i.e., $PD(P_2[1..2])=PD(P_2[3..4])=(0,0)$), which is the longest proper suffix of $P_2[1..4]$ that matches a  prefix of some pattern. Note that the parent-distance representation of $s_k$ may not be the suffix of the parent-distance representation of $s_i$. For example, $q_7$ has the parent-distance representation $(0, 0, 1, 2)$, but its failure function $q_2$ has the parent-distance representation $(0, 0)$ which is not a suffix of $(0, 0, 1, 2)$.

Algorithm \ref{alg:multiplefailurefunction} computes the failure function of the trie. As in the original Aho-Corasick algorithm, we traverse the trie with breadth-first order (except the root) and compute the failure function. The main difference between Algorithm \ref{alg:multiplefailurefunction} and the Aho-Corasick algorithm is at line 13, where we decide the next character to match. 
According to the definition of the trie, $node$ corresponds to the parent-distance representation of $P_{idx}[1..len]$, and so the parent of $node$ corresponds to the parent-distance representation of $P_{idx}[1..len-1]$. 
In the while loop from line 10 to 16, $ptr$ corresponds to the parent-distance representation of some suffix of $P_{idx}[1..len-1]$, because $ptr$ is a node that can be reached from the parent of $node$ following the failure links. Since $ptr$ corresponds to some string of length $plen$, we can conclude that $ptr$ represents $P_{idx}[len-plen..len-1]$. We want to check whether $P_{idx}[len-plen..len]$ matches some node in the trie, so we should check whether $ptr$ has the transition using $x = PD(P_{idx}[len-plen..len])[plen+1]$. If $ptr$ has the transition $ptr.trans(x)$, it corresponds to $P_{idx}[len-plen..len]$, and we can conclude that $\pi[node] = ptr.trans(x)$. If $ptr$ doesn't have such a transition, there is no node that represents $P_{idx}[len-plen..len]$, and thus we have to continue the loop.

For example, suppose that we compute the failure function of $q_7$ in Figure \ref{fig:ahocorasickexample}. From $idx[q_7] = 2$ and $len[q_7] = 4$, we know that $q_7$ represents $P_2[1..4]$, and so $q_4$, which is the parent of $q_7$, represents $P_2[1..3]$. We begin the while loop starting from $ptr = \pi[q_4] = q_3$. Since $len[q_3] = 2$, we know that $q_3$, which represents $P_3[1..2]$, matches $P_2[2..3]$. In order to check whether $P_2[2..4]$ matches some node in the trie, we compute $x = PD(P_2[2..4])[3] = 2$ and check whether $q_3.trans(x)$ exists. Since there is no such transition, we continue the while loop with $ptr = \pi[q_3] = q_1$. We know that $q_1$, which represents $P_1[1..1]$, matches $P_2[3..3]$ from $len[q_1] = 1$. In order to check whether $P_2[3..4]$ matches some node, we compute $x = PD(P_2[3..4])[2] = 0$ and check whether $q_1.trans(x)$ exists. Since there is such a transition, we conclude that $\pi[q_7] = q_1.trans(0) = q_2$. Note that $x$ may change during the while loop, which is not the case in the Aho-Corasick algorithm.

While computing the failure function, we can also compute the output function in the same way as the Aho-Corasick algorithm. The output function of node $q_i$ is the set of patterns which match some suffix of $s_i$. This function is used to output all possible matches at the node. 

\subsection{Multiple pattern matching}
Using the automaton defined above, we can solve multiple pattern Cartesian tree matching in $O(n \log k)$ time. The text search algorithm is essentially the same as that of the Aho-Corasick algorithm, following the trie and using the failure links in case of any mismatches. As in the single pattern case, we compute the parent-distance representation of the text online in the same way as Algorithm \ref{alg:textsearch} (using a deque) to ensure $O(m)$ space. 
The time complexity of our multiple pattern Cartesian tree matching is $O((n+m) \log k)$ using $O(m)$ space, where the $\log k$ factor is included due to the binary search tree in each node. Since there can be at most $k$ outgoing edges from each node, we can perform an operation in the binary search tree in $O(\log k)$ time. Combined with the time-complexity analysis of the Aho-Corasick algorithm, this shows that our algorithm has the time complexity of $O((n+m) \log k)$. We can reduce the time complexity further to randomized $O(n+m)$ time by using a hash instead of a binary search tree \cite{SuffixTree:Cole}.

\section{Cartesian Suffix Tree in Randomized \texorpdfstring{$O(n)$}{Lg} Time}
In this section we apply the notion of Cartesian tree matching to the suffix tree as in the cases of parameterized matching and order-preserving matching \cite{SuffixTree:Parameterized, Order-preservingIndexing}. We first define the Cartesian suffix tree, and show that it can be built in randomized $O(n)$ time or worst-case $O(n \log n)$ time using the result from Cole and Hariharan \cite{SuffixTree:Cole}.

\subsection{Defining Cartesian suffix tree}
The Cartesian suffix tree is an index data structure that allows us to find an occurrence of a given pattern $P[1..m]$ in randomized $O(m)$ time or worst-case $O(m \log n)$ time, where $n$ is the length of the text string. In order to store the information of Cartesian suffix trees efficiently, we again use the parent-distance representation from Section 3.1. Definition \ref{def:Cartesiansuffixtree} gives the formal definition of the Cartesian suffix tree.

\begin{definition}
\label{def:Cartesiansuffixtree}
\emph{(Cartesian suffix tree) Given a string $T[1..n]$, the} Cartesian suffix tree \emph{of $T$ is a compacted trie built with $PD(T[i..n]) \cdot (-1)$ for every $1 \leq i \leq n$ (where the special character $-1$ is concatenated to the end of $PD(T[i..n])$) and string $(-1)$.}
\end{definition}

Note that we append a special character $-1$ to the end of each parent-distance representation to ensure that no string is a prefix of another string.

\begin{figure}[t]
\centering
    \includegraphics[height=8.0cm]{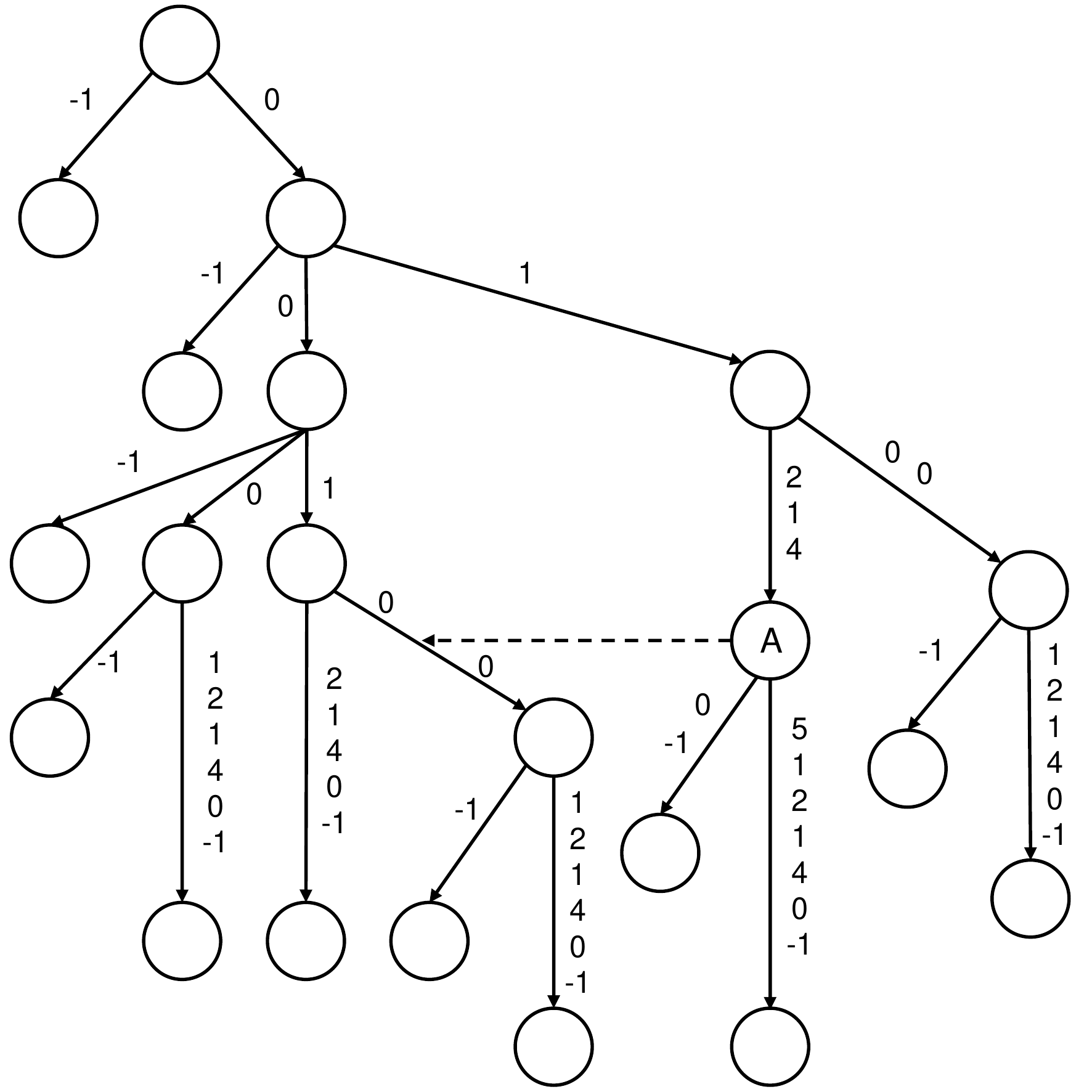}
    \caption{Cartesian suffix tree of $S = (2, 7, 5, 6, 4, 3, 11, 9, 10, 8, 1)$}
    \label{fig:suffixtreeexample}
\end{figure}

Figure \ref{fig:suffixtreeexample} shows an example Cartesian suffix tree of $T = (2, 7, 5, 6, 4, 3, 11, 9, 10, 8, 1)$. Each edge actually stores the suffix number, start position, and end position instead of the parent-distance representation itself. For example, node $A$ corresponds to substring $T[1..5]$ or $T[6..10]$, whose parent-distance representation is $PD(T[1..5])=PD(T[6..10])=(0, 1, 2, 1, 4)$. Hence, the edge that goes into node $A$ stores the suffix number $1$ or $6$, start and end positions $3$ and $5$. 

\subsection{Constructing Cartesian suffix tree}
There are several algorithms efficiently constructing the suffix tree, such as McCreight's algorithm \cite{SuffixTree:McCreight} and Ukkonen's algorithm \cite{SuffixTree:Ukkonen}. However, the \emph{distinct right context property} \cite{SuffixTree:Giancarlo, SuffixTree:Parameterized} should hold in order to apply these algorithms, which means that the suffix link of every internal node should point to an explicit node. The Cartesian suffix tree does not have the distinct right context property. In Figure \ref{fig:suffixtreeexample}, the internal node marked with $A$ does not satisfy this property because $PD(T[2..6])=PD(T[7..11])=(0, 0, 1, 0, 0)$ and thus there is no explicit node corresponding to parent-distance representation $(0,0,1,0)$.

In order to handle this issue, we use an algorithm due to Cole and Hariharan \cite{SuffixTree:Cole}. This algorithm can construct a compacted trie for a \emph{quasi-suffix collection}, which satisfies the following properties:

\begin{enumerate}
    \item A quasi-suffix collection is a set of $n$ strings $s_1, s_2, ..., s_n$, where the length of $s_i$ is $n+1-i$.
    \item For any two different strings $s_i$ and $s_j$, $s_i$ should not be a prefix of $s_j$.
    \item For any $i$ and $j$, if $s_i$ and $s_j$ have a common prefix of length $l$, $s_{i+1}$ and $s_{j+1}$ should have a common prefix of length at least $l-1$.
\end{enumerate}

A collection of parent-distance representations for the Cartesian suffix tree satisfies all of the above properties. The first two properties are trivial. Furthermore, if $s_i = PD(T[i..n]) \cdot (-1)$ and $s_j = PD(T[j..n]) \cdot (-1)$ have a common prefix of length $l$, i.e., $PD(T[i..i+l-1]) = PD(T[j..j+l-1])$, we can show that $PD(T[i+1..i+l-1]) = PD(T[j+1..j+l-1])$ by Equation~\ref{eq:substr}. Therefore, $s_{i+1} = PD(T[i+1..n]) \cdot (-1)$ and $s_{j+1} = PD(T[j+1..n]) \cdot (-1)$ have a common prefix of length $l-1$ or more, showing the third property holds.

One more property we need to perform Cole and Hariharan's algorithm is a \emph{character oracle}, which returns the $i$-th character of $s_j$ in constant time. We can do this in constant time using Equation~\ref{eq:substr}, once the parent-distance representation of $T$ is computed. 

Since we have all properties needed to perform Cole and Hariharan's algorithm, we can construct a Cartesian suffix tree in randomized $O(n)$ time using $O(n)$ space \cite{SuffixTree:Cole}.
In the worst case, it can be built in $O(n \log n)$ time by using a binary search tree instead of a hash table to store the children of each node in the suffix tree, because the alphabet size $|\Sigma|$ is $O(n)$. We can also modify our algorithm to construct a Cartesian suffix tree online, using the idea in \cite{SuffixTree:Parameterized2, SuffixTree:2DOnline}.

\section{Conclusion}

We have defined Cartesian tree matching and the parent-distance representation of a Cartesian tree. We developed a linear time algorithm for single pattern matching and an $O((n+m) \log k)$ deterministic time or $O(n+m)$ randomized time algorithm for multiple pattern matching. Finally, we defined an index data structure called Cartesian suffix tree, and showed that it can be constructed in $O(n)$ randomized time. We believe that the notion of Cartesian tree matching, which is a new metric on string matching and indexing over numeric strings, can be used in many applications.

There have been many works on approximate generalized matching. For example, there are results for approximate order-preserving matching \cite{Order-preservingMatching:Approximate}, approximate jumble matching \cite{JumbledMatching:Approximate}, approximate swapped matching \cite{SwappedMatching:Approximate}, and approximate parameterized matching \cite{ParameterizedMatching:Approximate2, ParameterizedMatching:Approximate}. There are also results on computing the period of a generalized string, such as computing the period in the order-preserving model \cite{Order-preservingMatching:Period}. Since Cartesian tree matching is first introduced in this paper, many problems including approximate matching and computing the period in the Cartesian tree matching model are future research topics.

\section*{Acknowledgments}
S.G. Park and K. Park were supported by Institute for Information \& communications Technology Promotion(IITP) grant funded by the Korea government(MSIT) (No. 2018-0-00551, Framework of Practical Algorithms for NP-hard Graph Problems).
A. Amir and G.M. Landau were partially supported by the Israel Science Foundation grant 571/14, and Grant No. 2014028 from the United States-Israel Binational Science Foundation (BSF).

\bibliography{bibliography}

\end{document}